\documentclass{article}
\usepackage{amsmath}
\usepackage{mathtools}
\usepackage{amssymb}
\usepackage{amsthm}
\usepackage{graphicx}
\usepackage{subcaption}
\usepackage{tikz}
\usetikzlibrary{arrows.meta, positioning, calc, shapes.geometric, shapes.arrows, fit, backgrounds}
\usepackage{booktabs}
\usepackage{natbib}

\usepackage{arxiv}
\renewcommand{\headeright}{Technical Report}
\renewcommand{\undertitle}{Technical Report}
\renewcommand{\shorttitle}{Predictive Coding via Proximal Gradients}

\usepackage{hyperref}
\hypersetup{colorlinks=true, linkcolor=gnblue4, breaklinks=true, urlcolor=gnblue4, citecolor=gnblue4}

\usepackage{xcolor}
\definecolor{fblightblue}{RGB}{220,235,250}

\newtheorem{proposition}{Proposition}
\newtheorem{lemma}[proposition]{Lemma}
\newtheorem{remark}[proposition]{Remark}

\newcommand{\until}[1]{\{1,\dots, #1\}}
\newcommand{\subscr}[2]{#1_{\textup{#2}}}

\newcommand{\prox}{\operatorname{prox}}

\newcommand{\real}{\ensuremath{\mathbb{R}}}

\newcommand\oprocendsymbol{\hbox{$\square$}}
\newcommand\oprocend{\relax\ifmmode\else\unskip\hfill\fi\oprocendsymbol}

\DeclareSymbolFont{bbold}{U}{bbold}{m}{n}
\DeclareSymbolFontAlphabet{\mathbbold}{bbold}
\newcommand{\vect}[1]{\mathbbold{#1}}
\newcommand{\vectorones}[1][]{\vect{1}_{#1}}

\newcommand{\argmin}{\ensuremath{\operatorname{argmin}}}
\newcommand{\argmax}{\ensuremath{\operatorname{argmax}}}

\newcommand{\relu}{\operatorname{relu}}
\newcommand{\soft}{\operatorname{soft}}
\newcommand{\sat}{\operatorname{sat}}

\newcommand{\E}{\mathbb{E}}
\newcommand{\e}{\mathrm{e}}

\definecolor{gnred}{RGB}{255,91,89}

\definecolor{gnred1}{RGB}{71,0,0}
\definecolor{gnred2}{RGB}{117,0,0}
\definecolor{gnred3}{RGB}{164,0,0}
\definecolor{gnred4}{RGB}{211,0,0}
\definecolor{gnred5}{RGB}{255,0,0}
\definecolor{gnred6}{RGB}{255,42,34}
\definecolor{gnred7}{RGB}{255,91,89}

\definecolor{gnblue1}{RGB}{0,36,71}
\definecolor{gnblue2}{RGB}{0,60,118}
\definecolor{gnblue3}{RGB}{0,85,164}
\definecolor{gnblue4}{RGB}{0,108,212}
\definecolor{gnblue5}{RGB}{0,133,255}
\definecolor{gnblue6}{RGB}{35,156,255}
\definecolor{gnblue7}{RGB}{88,177,255}

\definecolor{gnbrown1}{RGB}{71,27,0}
\definecolor{gnbrown2}{RGB}{117,45,0}
\definecolor{gnbrown3}{RGB}{164,62,0}
\definecolor{gnbrown4}{RGB}{211,80,0}
\definecolor{gnbrown5}{RGB}{255,97,0}
\definecolor{gnbrown6}{RGB}{255,127,26}
\definecolor{gnbrown7}{RGB}{255,155,86}

\newcommand{\mcE}{\mathcal{E}}
\newcommand{\mcR}{\mathcal{R}}

\newcommand{\mcN}{\mathcal{N}}

\newcommand{\softmax}[1]{\operatorname{softmax}_{#1}}
\newcommand{\entropy}{\mathsf{H}}
\newcommand{\entropybar}{\subscr{\entropy}{barrier}}
\newcommand{\simplex}[1]{\Delta_{#1}}

\title{Predictive Coding with Bayesian Priors \\ via Proximal Gradients}

\author{\large Francesco Bullo\thanks{Francesco Bullo, Mechanical
    Engineering, 2325 Engineering Bldg II, UCSB, Santa Barbara, CA
    93106-5070, USA. Website \url{https://fbullo.github.io}. Email:
    \url{mailto:bullo@ucsb.edu}.} \\ Department of Mechanical
  Engineering\\ and Dynamical Neuroscience Program\\ UC Santa Barbara}

\begin{document}

\maketitle

\begin{abstract}
  We recast predictive coding as continuous-time proximal gradient descent
  applied to a regularized maximum-a-posteriori (MAP) objective.  We study
  first a single-level problem and then a multi-level hierarchy.  For the
  single-level problem, we show that proximal gradient descent is precisely
  a leaky firing-rate network: the membrane leak, the effective recurrent
  matrix, the local synaptic drive, and the static nonlinearity all follow
  from one optimization principle, and the resulting circuit is the one
  proposed by Rao and Ballard.  The prior selects the nonlinearity through
  its proximal operator, and the likelihood precision sets the gain on the
  observation.  For the hierarchy, we show that a classical
  variable-splitting relaxation of the deep MAP problem yields hierarchical
  predictive coding as the interconnection of local and distributed
  solvers.  In probabilistic modeling terms, this relaxation replaces the
  directed generative chain by an undirected Markov random field whose node
  potentials are the level-wise priors.  Each level then applies its own
  activation function, namely the proximal operator of its prior.
\end{abstract}

\medskip
\noindent\textbf{Keywords:} predictive coding; proximal gradient descent;
maximum a-posteriori estimation; firing-rate networks; proximal operators;
activation functions; hierarchical inference; variable splitting; Markov random
fields; Bayesian priors.

\smallskip
\noindent\textbf{MSC 2020:} 92B20 (primary); 90C25; 68T07; 62F15.

\section{Introduction}\label{sec:intro}

\paragraph{Context and Problem Description.}\label{sec:context}
The view of the brain as a probabilistic inference machine originates with
Helmholtz's notion of perception as unconscious inference \citep{HvH:1867}.
Modern predictive coding turns this view into a concrete proposal: neural
circuits implement a generative model of the sensory data and reduce the
discrepancy between top-down predictions and bottom-up observations
\citep{RPNR-DHB:99,KJF:05,KJF:10}.  The Free Energy Principle extends the
same picture and asserts that any self-organizing biological system
minimizes variational free energy \citep{KJF:10,CLB-CSK-SM-AKS:17}.  In
parallel with these developments, computational neuroscience has identified
a small number of recurring operations, such as linear filtering,
thresholding, normalization~\citep{MC-DJH:12}, and sparse
coding~\citep{BAO-DJF:96}.  The aim of this paper is twofold.  First, we
clarify the relationship among (i) the Bayesian priors that the brain
assumes, (ii) the dynamical system that its neurons implement, and (iii)
the static activation function that those neurons apply.  Second, we use
this relationship to derive a hierarchical predictive coding architecture
from a single regularized maximum-a-posteriori (MAP) objective.

\paragraph{Literature Review.}\label{sec:literature}
\emph{Predictive coding and the Bayesian brain.}  In their groundbreaking
work \cite{RPNR-DHB:99} introduced predictive coding for the visual cortex.
These ideas were cast within a hierarchical Bayesian framework and the Free
Energy Principle by \cite{KJF:05,KJF:08,KJF:10}.  The variational
interpretation of predictive coding is reviewed in
\citep{CLB-CSK-SM-AKS:17,BM-AS-CLB:21,TS-AM-CLB-TL-RPNR-KF-AO:26}.  The
canonical circuit account of \citep{AMB-WMU-RAA-GRM-PF-KJF:12} (see
also~\citep{GBK-TDM:18}) associates the prediction-error and state
variables with distinct neural populations, with experimental support
reported in~\citep{AA-BW-GBK:17}.

\emph{Level-specific priors.}  \cite{TSL-DM:03} formalized the cortex as
hierarchical empirical Bayes with level-specific priors over edges, parts,
and objects.  \cite{KJF:08} made each level an empirical prior on the level
below.  The mixed models in~\citep{KJF-TF-FR-PS-GP:17,KJF-TP-BdV:17} combine
continuous predictive coding at lower levels with discrete (categorical)
Markov decision processes at higher levels, and they use two distinct
inference methods.  We refer the reader to~\citep{LFB-EKM:26} for a recent
perspective on categorization.  In contrast, the present paper recovers both
regimes from a single hierarchical proximal-gradient dynamics in which the
only level-dependent ingredient is the prior.

\emph{Predictive coding and variational autoencoders.}  A comprehensive
comparative study of Rao--Ballard predictive coding and the modern
variational autoencoder (VAE) was developed by \cite{JM:22}, who identifies
iterative versus amortized inference as the main distinction between the
two.  In the framework of this paper, a sparse-coding VAE (Laplace prior),
a nonnegative autoencoder (nonnegativity prior), and a softmax-classifier
head (shifted-entropy-barrier prior) are all instances of the same
hierarchical proximal-gradient dynamics. As described
by~\citep{JCRW-RB:17,BM-AT-CLB:22}, predictive coding with local Hebbian
plasticity approximates backpropagation along arbitrary computation graphs,
including autoencoder topologies.

\emph{Variable splitting.}  The auxiliary-variables view of a deep model
dates to the ``method of auxiliary coordinates'' by~\cite{MCP-WW:14} and
was developed for ADMM-based deep-network training by
\cite{GT-AB-MX-MS-AP-TG:16}. Other notable references on variable
splitting, ADMM, and proximal algorithms include
\citep{SP-NP-EC-NP-JE:10,NP-SB:14,PLC-JCP:11}.

\emph{Sparse coding and proximal-gradient algorithms.}  \cite{BAO-DJF:96}
showed that sparse-coding objectives applied to natural images learn
filters that resemble those found in early visual cortex.
\cite{CJR-DHJ-RGB-BAO:08} introduced the locally-competitive algorithm
(LCA), a recurrent network in membrane-potential form that solves the
sparse-coding problem.  Continuous-time proximal-gradient dynamics have
been recently introduced and analyzed in
\citep{SHM-MRJ:21,VC-AG-AD-GR-FB:23a,AG-AD-FB:24d,AD-VC-AG-GR-FB:23f,VC-AG-AD-GR-FB:23c};
in particular, \cite{VC-AG-AD-GR-FB:23a} propose a positive competitive
network similar to the LCA algorithm, but for firing rate networks.

\paragraph{Contributions and Discussion.}\label{sec:contributions}

After collecting preliminaries in Section~\ref{sec:prelim}, we revisit and
organize recent results that connect proximal gradient descent and
activation functions with predictive coding, and we then present our main
contribution on the hierarchical case.

We begin in Section~\ref{sec:firing-rate} by passing from unconstrained to
constrained inference (Proposition~\ref{prop:firing-rate}).  This
proposition is, for the most part, contained in the recent
work~\citep{VC-AG-AD-GR-FB:23a}; we restate it here, in a general
precision-weighted form, because it is the single-level building block for
the hierarchical construction that follows.  Standard predictive
coding~\citep{RPNR-DHB:99} performs unconstrained gradient descent on a
Gaussian data-fit energy.  We replace this gradient descent by a proximal
gradient dynamics, which accommodates nonsmooth and structured priors.
This single change yields an exact firing-rate network with (i) a passive
membrane leak, (ii) an effective recurrent matrix obtained from the
synaptic Gram matrix, (iii) a feedforward drive from the precision-weighted
observation, and (iv) a static nonlinearity equal to the proximal operator
of the prior. 

In our treatment of predictive coding, the static nonlinearities of
firing-rate neurons need not be postulated; they arise as the proximal
operators of specific Bayesian priors.  The correspondence between a prior
and its proximal operator (Laplace to soft-threshold, nonnegativity to
ReLU) is classical in the proximal-splitting literature
\citep{PLC-JCP:11,NP-SB:14}, and even the shifted-entropy-barrier to
softmax entry is known: it appears as \citep[Exercise~2.23]{PLC-JCP:20b}
and is developed by \citet{FR-VC-FB-GR:25k}.  Our contribution is not this
dictionary itself, but its interpretation as the catalog of activation
functions of a single predictive-coding network based on proximal
gradients.  The full catalog appears in Subsection~\ref{sec:prelim-priors}.

Our main contribution is the interpretation of hierarchical predictive
coding as distributed proximal gradient descent
(Proposition~\ref{prop:hier-pgd} in Section~\ref{sec:hier}).  We show that
the centralized MAP problem for a deep linear generative model admits, by
means of a classical variable-splitting relaxation, a reformulation as the
joint energy of hierarchical predictive coding.  Applied to a regularized
MAP objective, the splitting produces precisely the Rao--Ballard
prediction-error circuit, in which the level-wise prior plays the role of
the splitting penalty and selects the activation function at that level
through its proximal operator.  The single-level construction extends to a
deep hierarchy by means of three facts.  First, proximal gradient descent
on the joint hierarchical MAP energy is not a single combined update; it
decomposes exactly into a collection of local proximal-gradient steps, one
per level.  Second, each level runs in closed loop with its neighbors
alone: its only drive is the precision-weighted prediction error that it
exchanges with the level below and the level above, so that no global
coordinator is required.  Third, the levels are heterogeneous.  Each level
carries its own proximal operator $\prox_{\mcR^{(\ell)}}$ (and hence its
own activation function), its own time constant $\tau^{(\ell)}$, and its
own likelihood precision $\Pi^{(\ell)}$, so that different areas implement
different nonlinearities and gains within a single dynamics.  We envision
that our prior and activation function based generalization of Gaussian
predictive coding holds the promise to lead to more expressive and more
powerful models and improve the comparison of predictive coding performance
with that of autoencoders.

We do not review the convergence properties of proximal gradient descent
dynamics in this document; the existing analyses are summarized in
Remark~\ref{remark:convergence} below.

\paragraph{Paper Organization.}\label{sec:organization}
The rest of the paper is organized as follows.  Section~\ref{sec:prelim}
collects preliminaries on firing-rate networks, MAP estimation,
proximal-gradient preliminaries and the catalog of priors, proximal
operators, and activation functions.  Section~\ref{sec:firing-rate} derives
the single-level leaky firing-rate network
(Proposition~\ref{prop:firing-rate}) and presents three worked examples.
Section~\ref{sec:hier} extends the construction to a hierarchy by means of
variable splitting (Proposition~\ref{prop:hier-pgd}).

%% \clearpage
\section{Preliminaries}\label{sec:prelim}

This section collects the background used throughout the paper: the
firing-rate network model (§\ref{sec:prelim-fr}), maximum a-posteriori
inference (§\ref{sec:prelim-map}), proximal-gradient dynamics
(§\ref{sec:prelim-pgd}), and a catalog of canonical priors of random
variables, their proximal operators, and the firing-rate activation
functions they yield (§\ref{sec:prelim-priors}).

\subsection{Firing-Rate Networks}\label{sec:prelim-fr}
A \emph{firing-rate network} models a population of $n$ neurons whose state
$x \in \real^n$ evolves under a leak, recurrent interactions, and a
feedforward stimulus.  Given a \emph{stimulus} $u$, a \emph{synaptic matrix}
$W$, an \emph{input matrix} $B$, and an \emph{activation function} $\sigma$
applied component-wise, the dynamics are
\begin{equation}
  \dot{x} \;=\; \subscr{\mathsf{F}}{FR}(x,u) \;:=\; -x \;+\; \sigma\bigl(W x + B u\bigr) ,
  \label{eq:firing-rate}
\end{equation}
where $-x$ is the membrane leak, $W x$ the recurrent drive, $B u$ the
feedforward stimulus, and $\sigma$ the static nonlinearity.  A central aim of
this paper is to show that such networks are proximal gradient descent on a
regularized energy, so that the activation $\sigma$, the synaptic matrix $W$,
and the input matrix $B$ all follow from an optimization objective.

\subsection{Maximum A-Posteriori Inference}\label{sec:prelim-map}
Let $y$ be a measurement of a latent random variable $x$.  The
\emph{maximum a-posteriori} (MAP) estimate is, by definition, the mode of
the posterior $p(x \mid y)$:
\begin{equation}
  \subscr{x}{MAP} \;=\; \argmax_x \; p(x \mid y)
  \;=\; \argmin_x \; \{ -\log p(x \mid y) \} .
\end{equation}
By Bayes' theorem, $p(x \mid y) = p(y \mid x)\,p(x) / p(y)$, so that
$-\log p(x \mid y) = -\log p(y \mid x) - \log p(x) + \log p(y)$.  Since the
marginal $\log p(y)$ does not depend on $x$, it can be dropped from the
minimization, which yields
\begin{equation}
  \subscr{x}{MAP} \;=\; \argmin_x \; \{ -\log p(y \mid x) - \log p(x) \} ,
  \label{eq:map-general}
\end{equation}
where the first term is the negative log-likelihood (the ``data-fitting''
term) and the second term is the negative log-prior, which plays the role of
the regularizer.

For the linear-Gaussian model, the MAP problem~\eqref{eq:map-general}
reduces to a regularized least-squares problem, as follows.

\begin{lemma}[Linear-Gaussian MAP energy]\label{lem:gauss-map}
Let $y \in \real^N$ be a fixed (clamped) measurement and let the likelihood
be linear-Gaussian, $y \mid x \sim \mcN(\Phi x, \Pi^{-1})$, with dictionary
matrix $\Phi \in \real^{N \times M}$ and symmetric positive-definite
precision $\Pi \in \real^{N \times N}$; let the prior be $p(x) \propto
\exp(-\mcR(x))$ with $\mcR$ proper, closed, and convex.  The
\emph{prediction} is $\hat y(x) := \E[y \mid x] = \Phi x$ and the
\emph{prediction error} is $\epsilon(x) := y - \hat y(x)$.  Then the MAP
problem~\eqref{eq:map-general} reduces to minimizing the \emph{MAP energy}
\begin{equation}
  \subscr{\mcE}{MAP}(x) := \underbrace{\tfrac{1}{2}\|y - \hat y(x)\|_\Pi^2}_{\subscr{\mcE}{recon}(x)} + \mcR(x) ,
  \qquad \|v\|_\Pi^2 := v^\top \Pi\, v .
  \label{eq:map-prelim}
\end{equation}
The \emph{reconstruction energy} $\subscr{\mcE}{recon}(x) := \tfrac{1}{2}\|y
- \hat y(x)\|_\Pi^2 = -\log p(y \mid x) + \textnormal{const}$ is convex and
smooth, with gradient
\begin{equation}
  \nabla \subscr{\mcE}{recon}(x) = -\Phi^\top \Pi\, \epsilon(x) ,
  \label{eq:gauss-grad}
\end{equation}
Hessian $\nabla^2 \subscr{\mcE}{recon}(x) = \Phi^\top \Pi\, \Phi \succeq 0$,
and Lipschitz gradient with constant $L = \lambda_{\max}(\Phi^\top \Pi\,
\Phi)$.
\end{lemma}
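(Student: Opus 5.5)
The plan is to substitute the two model ingredients into the general MAP problem~\eqref{eq:map-general} and then read off the calculus facts from the explicit quadratic form.

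\emph{Step 1: the likelihood term.} Write the Gaussian likelihood density
\[
p(y\mid x)=(2\pi)^{-N/2}(\det\Pi)^{1/2}\exp\bigl(-\tfrac12 (y-\Phi x)^\top\Pi(y-\Phi x)\bigr),
\]
which is well defined because $\Pi\succ 0$. Taking $-\log$ gives $\tfrac12\|y-\Phi x\|_\Pi^2$ plus a constant that depends only on $N$ and $\Pi$, not on $x$. This proves $\subscr{\mcE}{recon}(x)=-\log p(y\mid x)+\textnormal{const}$.

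\emph{Step 2: the prior term and the MAP energy.} Similarly, $-\log p(x)=\mcR(x)+\log Z$, where $Z$ is the normalizing constant. Here we implicitly assume that $\exp(-\mcR)$ is integrable, so that the prior is a genuine density; this is the only point that needs a hypothesis beyond the statement, and I would simply note it. Since additive constants do not change the argmin, \eqref{eq:map-general} reduces to minimizing $\subscr{\mcE}{MAP}$ in~\eqref{eq:map-prelim}.

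\emph{Step 3: gradient and Hessian.} Expand
\[
\subscr{\mcE}{recon}(x)=\tfrac12 x^\top\Phi^\top\Pi\Phi x - x^\top\Phi^\top\Pi y+\tfrac12 y^\top\Pi y,
\]
using the symmetry of $\Pi$. Differentiating gives
\[
\nabla\subscr{\mcE}{recon}(x)=\Phi^\top\Pi\Phi x-\Phi^\top\Pi y=-\Phi^\top\Pi\,\epsilon(x),
\]
which is~\eqref{eq:gauss-grad}, and the Hessian is the constant matrix $\Phi^\top\Pi\Phi$. This matrix is positive semidefinite because
\[
v^\top\Phi^\top\Pi\Phi v=\|\Phi v\|_\Pi^2\ge 0 .
\]
Convexity follows from this, and smoothness is immediate because the function is quadratic.

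\emph{Step 4: the Lipschitz constant.} We have
\[
\|\nabla\subscr{\mcE}{recon}(x)-\nabla\subscr{\mcE}{recon}(x')\|=\|\Phi^\top\Pi\Phi(x-x')\|\le\|\Phi^\top\Pi\Phi\|_2\,\|x-x'\|.
\]
For a symmetric positive semidefinite matrix the spectral norm equals $\lambda_{\max}$, so $L=\lambda_{\max}(\Phi^\top\Pi\Phi)$. The bound is tight: take $x-x'$ along a top eigenvector.

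No step is a real obstacle, since everything is direct computation. The only subtle point is the implicit normalizability of $\exp(-\mcR)$ noted in Step~2.
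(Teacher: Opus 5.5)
Your proposal is correct and follows the paper's route: take $-\log$ of the Gaussian likelihood and of the prior, discard the $x$-independent constants (your normalizing factor is exactly the paper's $\tfrac12\log\det(2\pi\Pi^{-1})$), then differentiate the quadratic to obtain the gradient $-\Phi^\top\Pi\,\epsilon(x)$, the constant Hessian $\Phi^\top\Pi\Phi\succeq 0$, and $L=\lambda_{\max}(\Phi^\top\Pi\Phi)$. The integrability of $\exp(-\mcR)$, which you flag, is left implicit in the paper, and your spectral-norm argument spells out the Lipschitz step that the paper only asserts.
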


\begin{proof}
  With $\hat y(x) = \Phi x$, the Gaussian density gives $-\log p(y \mid x)
  = \tfrac{1}{2}\|y - \hat y(x)\|_\Pi^2 +
  \tfrac{1}{2}\log\det(2\pi\Pi^{-1})$, whose second term does not depend on
  $x$, and the prior gives $-\log p(x) = \mcR(x) + \textnormal{const}$.
  Substituting both into~\eqref{eq:map-general} and discarding the
  $x$-independent constants yields the minimization of the MAP energy
  $\subscr{\mcE}{MAP}$ in~\eqref{eq:map-prelim}.  Differentiating
  $\subscr{\mcE}{recon}(x) = \tfrac{1}{2}(y - \Phi x)^\top \Pi (y - \Phi
  x)$ gives $\nabla\subscr{\mcE}{recon}(x) = -\Phi^\top \Pi (y - \Phi x) =
  -\Phi^\top \Pi\, \epsilon(x)$ and $\nabla^2\subscr{\mcE}{recon}(x) =
  \Phi^\top \Pi\, \Phi$. This quantity is positive semidefinite because
  $\Pi \succ 0$; the largest eigenvalue of this Hessian is the Lipschitz
  constant of the gradient.
\end{proof}

\subsection{Proximal Operators and Continuous-Time Proximal Gradient Descent}
\label{sec:prelim-pgd}
Consider a \emph{regularized optimization problem}
\begin{equation}
 \min_{x \in \real^M} \; \mcE(x) + \mcR(x),
\end{equation}
where the \emph{nominal energy} $\mcE$ is differentiable with
Lipschitz-continuous gradient, and the \emph{regularizer} $\mcR$ is proper,
closed, and convex (and possibly nonsmooth).  The \emph{proximal operator} of $\mcR$ is
\begin{equation}
    \prox_{\mcR}(v) \;:=\; \argmin_{z \in
      \real^M} \;\Bigl\{\, \mcR(z) + \tfrac{1}{2}\|z - v\|_2^2
    \,\Bigr\} ,
    \label{eq:prox-def}
\end{equation}
which is well-defined and single-valued under the stated assumptions
\citep{NP-SB:14, PLC-JCP:11}.  The continuous-time \emph{proximal gradient
descent} dynamics are
\begin{equation}
    \tau\, \dot x \;=\; -x \;+\; \prox_{\mcR}\bigl(x - \nabla \mcE(x)\bigr).
    \label{eq:pgd-generic}
\end{equation}
\emph{Intuition.}  The vector field in the right-hand side of
\eqref{eq:pgd-generic} pulls the state $x$ toward the $\mcR$-proximal point
obtained from a single step of gradient descent on $\mcE$: starting from
the current state $x$, one takes a gradient step $x \mapsto x -
\nabla\mcE(x)$ on the smooth part and then applies the proximal
operator of $\mcR$.

\subsection{Canonical Priors, Proximal Operators, and Activation Functions}
\label{sec:prelim-priors}

The proximal operators of canonical priors coincide, after rescaling, with
the static input-output nonlinearities used in firing-rate network models.
This correspondence has been recognized in special cases throughout the
sparse-coding and proximal-algorithms literature \citep{BAO-DJF:96,
  CJR-DHJ-RGB-BAO:08, NP-SB:14, VC-AG-AD-GR-FB:23a}.  We catalog these
canonical entries here for reference.

\paragraph{Activation functions.}
We use the following maps, with $\relu$, $\sat_{[0,1]}$, and
$\soft_\lambda$ applied entrywise to a vector argument: the \emph{rectified
linear unit} $\relu(v) := \max(0,v)$; the \emph{saturation}
$\sat_{[0,1]}(v) := \min\bigl(1,\max(0,v)\bigr)$; the \emph{soft-threshold}
$\soft_\lambda(v) := \mathrm{sign}(v)\,\max(0,|v|-\lambda)$, at threshold
$\lambda \geq 0$; and the \emph{softmax} $[\softmax{\theta}(v)]_i :=
\e^{v_i/\theta}/\bigl(\sum_j \e^{v_j/\theta}\bigr)$, at temperature $\theta
> 0$.  For a distribution $p \in \simplex{n}$, the \emph{entropy} is
$\entropy(p) := -\sum_i p_i \log p_i$.

\paragraph{Canonical correspondences.}
Table~\ref{tab:priors} summarizes the canonical triples of prior, proximal
operator, and activation function.

\begin{table}[htbp]
    \centering
    \small
    \setlength{\tabcolsep}{8pt}
    \begin{tabular}{lll}
        \toprule
        Prior/regularizer $\mcR(x)$ & $\prox_{\mcR}(v)$ & Activation \\
        \midrule
        $\tfrac{\lambda}{2}\,\|x\|_2^2$ (Gaussian) & $v/(1 + \lambda)$ & linear shrinkage \\
        $\lambda\, \|x\|_1$ (Laplace) & $\soft_\lambda(v)$ & soft-threshold  \\
        $\iota_{\{x \geq 0\}}(x)$ (nonnegativity) & $\relu(v)$ & ReLU \\
        $\iota_{[0,1]}(x)$ (box constraint) & $\sat_{[0,1]}(v)$ & clipped linear / saturation \\
        $\entropybar(x) = -\theta\,\entropy(x) - \tfrac{1}{2}\|x\|^2 + \iota_{\simplex{}}(x)$ & $\softmax{\theta}(v)$ & softmax  \\
        \bottomrule \smallskip
    \end{tabular}
    \caption{Correspondence between the prior/regularizer $\mcR$, the
      proximal operator $\prox_{\mcR}$, and the
      resulting static activation function of the proximal-gradient
      firing-rate network of §\ref{sec:firing-rate}.  The entropic row uses the
      \emph{shifted entropy barrier} with temperature $\theta \geq 1$:
      $\entropybar(x) := -\theta\,\entropy(x) -
      \tfrac{1}{2}\|x\|^2$, $x \in \simplex{n}$ (and $+\infty$
      otherwise); this function is closed, convex, and proper for $\theta\geq 1$,
      and its proximal operator is $\softmax{\theta}$ exactly
      \citep{FR-VC-FB-GR:25k}; see also \citet[Exercise~2.23]{PLC-JCP:20b}.}
    \label{tab:priors}
\end{table}

\paragraph{Gaussian prior (linear shrinkage).}
The Gaussian regularizer, $\mcR(x) = \tfrac{\lambda}{2}\|x\|_2^2$,
represents a basic energy constraint on the system.  Its proximal
operator is
$\prox_{\mcR}(v) = v/(1 + \lambda)$, which
acts as a linear shrinkage function and corresponds to neurons
operating in a linear, graded response regime.

\paragraph{Laplace prior (soft-thresholding).}
The Laplace prior, $\mcR(x) = \lambda\|x\|_1$, is the standard example for
sparse coding \citep{BAO-DJF:96}.  Its proximal operator is the
soft-thresholding function, which produces the LCA (membrane-potential)
dynamics of \citep{CJR-DHJ-RGB-BAO:08}.  This prior models the localized
filters found in early visual cortex.

\paragraph{Domain constraints (ReLU and saturation).}
Hard constraints represent physical limits on neural firing.  The
nonnegativity prior, $\mcR(x) = \iota_{\{x \geq 0\}}(x)$, yields the
standard ReLU activation function $\relu(v)$, which reflects that firing
rates cannot be negative.  When this is combined with an upper limit, the
box constraint $\mcR(x) = \iota_{[0,1]}(x)$ produces the clipped linear
threshold model $\sat_{[0,1]}(v)$.

\paragraph{Shifted entropy barrier (softmax).}
For categorical population codes, the shifted entropy barrier $\entropybar$
acts as a regularizer over the probability simplex \citep{FR-VC-FB-GR:25k}.
Its proximal operator is the softmax function, which biases the population
activity toward categorical representations.

%% \clearpage
\section{Predictive Coding as a Leaky Firing-Rate Network}
\label{sec:firing-rate}

We instantiate the maximum a-posteriori inference of
Subsection~\ref{sec:prelim-map} for the linear-Gaussian generative model in
Lemma~\ref{lem:gauss-map}, see \citep{RPNR-DHB:99, KJF:05}.  Let $y \in
\real^N$ be a clamped sensory observation and let $x \in \real^M$ be an
internal state (the ``neural activity'').  A \emph{dictionary matrix} $\Phi
\in \real^{N \times M}$ maps states to predictions via
\[
  y \;=\; \Phi x + \eta, \qquad \eta \sim \mathcal{N}(0,\,\Pi^{-1}),
\]
where $\Pi \in \real^{N \times N}$ is a symmetric positive-definite
\emph{likelihood precision}.  Recall the precision-weighted squared norm
$\|v\|_\Pi^2 = v^\top\Pi v$.  The derived quantities are the
\emph{prediction} $\hat{y} := \Phi x$ and the \emph{prediction error}
$\epsilon := y - \Phi x \in \real^N$.  The case $M > N$ is the
\emph{overcomplete} regime, as in the sparse-coding model
of~\cite{BAO-DJF:96}.

\paragraph{Standard predictive-coding dynamics and the MAP correction.}
Under a flat (uniform) prior on $x$, gradient descent on the negative
log-likelihood gives the standard predictive-coding dynamics
\citep{RPNR-DHB:99}
\begin{equation}
  \dot{x} \;=\; \Phi^\top \Pi\,\epsilon ,
  \label{eq:standard-pc}
\end{equation}
which lack a rest state and admit dense representations.  Both shortcomings
are removed by restoring the prior $\mcR$ and performing proximal-gradient
descent on the MAP energy $\subscr{\mcE}{MAP}$ of
Subsection~\ref{sec:prelim-map}.

\subsection{The Single-Level Firing-Rate Network}
\begin{proposition}[Predictive coding as a leaky firing-rate network]\label{prop:firing-rate}
Fix the following data:
\begin{itemize}
\item a dictionary matrix $\Phi \in \real^{N \times M}$ and a fixed observation $y \in \real^N$, for integers $N, M \geq 1$;
\item a likelihood precision $\Pi \in \real^{N \times N}$, symmetric and positive-definite ($\Pi \succ 0$);
\item a regularizer / negative log-prior $\mcR : \real^M \to \real \cup \{+\infty\}$, proper, closed, and convex;
\item a time constant $\tau > 0$.
\end{itemize}
Define the \emph{MAP energy}
\begin{equation}
  \subscr{\mcE}{MAP}(x) \;:=\;
  \subscr{\mcE}{recon}(x)+ \mcR(x) \;=\;
  \tfrac{1}{2}\|y - \Phi x\|_\Pi^2 +
  \mcR(x) , \qquad x \in \real^M ,
\end{equation}
and the \emph{prediction error}
$\epsilon(x) := y - \Phi x \in \real^N$.  Then continuous-time
proximal-gradient descent on $\subscr{\mcE}{MAP}$ is the firing-rate
network
\begin{equation}
\tau\,\dot{x} \;=\; -x \;+\;
\prox_{\mcR}\bigl(x + \Phi^\top \Pi\, \epsilon(x)\bigr) ,
\label{eq:pgd-network}
\end{equation}
or equivalently, in recurrent-network form,
\begin{equation}
    \tau\,\dot{x} \;=\; -x \;+\;
\prox_{\mcR}\bigl(W\, x + \Phi^\top \Pi\, y\bigr) ,
    \label{eq:pgd-recurrent}
\end{equation}
with leak $-x$, \emph{effective synaptic matrix}
$W := I_M - \Phi^\top \Pi\, \Phi \in \real^{M \times M}$,
feedforward drive $\Phi^\top \Pi\, y$, and static nonlinearity
$\prox_{\mcR}$.
\end{proposition}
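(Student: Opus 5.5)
The proof is a direct substitution into the generic proximal-gradient dynamics~\eqref{eq:pgd-generic}, with the split dictated by Lemma~\ref{lem:gauss-map}. I take the nominal energy to be $\mcE = \subscr{\mcE}{recon}(x) = \tfrac12\|y-\Phi x\|_\Pi^2$ and the regularizer to be the given $\mcR$. First I check that the standing assumptions of Subsection~\ref{sec:prelim-pgd} hold. By Lemma~\ref{lem:gauss-map}, $\subscr{\mcE}{recon}$ is convex and differentiable, and its gradient is Lipschitz with constant $\lambda_{\max}(\Phi^\top\Pi\Phi)$. Since $\mcR$ is proper, closed, and convex by hypothesis, $\prox_{\mcR}$ is well defined and single-valued. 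The right-hand side of~\eqref{eq:pgd-network} is therefore a well-defined vector field on $\real^M$, and the continuous-time proximal gradient descent on $\subscr{\mcE}{MAP}$ is exactly~\eqref{eq:pgd-generic} with these choices.

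Next I substitute the gradient formula~\eqref{eq:gauss-grad}, namely $\nabla\subscr{\mcE}{recon}(x) = -\Phi^\top\Pi\,\epsilon(x)$. The argument of the proximal operator becomes
\[
x - \nabla\subscr{\mcE}{recon}(x) = x + \Phi^\top\Pi\,\epsilon(x).
\]
Inserting this into~\eqref{eq:pgd-generic} gives~\eqref{eq:pgd-network} directly.

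To obtain the recurrent form I expand $\epsilon(x) = y - \Phi x$:
\[
x + \Phi^\top\Pi(y-\Phi x) = (I_M - \Phi^\top\Pi\Phi)\,x + \Phi^\top\Pi\, y = Wx + \Phi^\top\Pi\, y.
\]
This identity yields~\eqref{eq:pgd-recurrent}. I then match the result term by term with the firing-rate model~\eqref{eq:firing-rate}: the leak is $-x$, the synaptic matrix is $W$, the stimulus is $u=y$ with input matrix $B=\Phi^\top\Pi$, and the activation is $\sigma=\prox_{\mcR}$. The time constant $\tau$ only rescales time and is carried along unchanged.

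None of these steps is a genuine obstacle. The only point that needs care is interpretive. In general $\prox_{\mcR}$ acts component-wise only when $\mcR$ is separable, as for the entries of Table~\ref{tab:priors} other than softmax. For a nonseparable $\mcR$, the correspondence with~\eqref{eq:firing-rate} should therefore be read with a vector-valued nonlinearity $\sigma$. I would state this explicitly rather than claim component-wise action in general.
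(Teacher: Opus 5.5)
Your proposal is correct and follows the paper's proof essentially step for step. You substitute $\nabla\subscr{\mcE}{recon}(x) = -\Phi^\top\Pi\,\epsilon(x)$ into the generic dynamics~\eqref{eq:pgd-generic} to get~\eqref{eq:pgd-network}, then expand $\epsilon = y - \Phi x$ to obtain~\eqref{eq:pgd-recurrent} with $W = I_M - \Phi^\top\Pi\Phi$. Your well-posedness check, and your caveat that $\prox_{\mcR}$ is component-wise only for separable $\mcR$ (so softmax needs a vector-valued $\sigma$ in~\eqref{eq:firing-rate}), are sound refinements that the paper leaves implicit.
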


\begin{proof}
\textit{Step 1: Gradient computation.}
The reconstruction energy is
$\subscr{\mcE}{recon}(x) = \tfrac{1}{2}\|y - \Phi x\|_\Pi^2 = \tfrac{1}{2}(y-\Phi x)^\top\Pi(y-\Phi x)$.
Differentiating with respect to $x$:
$$ \nabla_x \subscr{\mcE}{recon}(x) \;=\; -\Phi^\top \Pi\,(y - \Phi x) \;=\; -\Phi^\top \Pi\,\epsilon, $$
where $\epsilon := y - \Phi x$ is the prediction error.

\textit{Step 2: Substitution into the generic proximal gradient descent update.}
Applying \eqref{eq:pgd-generic} with the smooth part $\mcE$ instantiated as $\subscr{\mcE}{recon}$ gives
$\tau \dot{x} = -x + \prox_{\mcR}(x - \nabla\subscr{\mcE}{recon}(x))$.
Inserting $-\nabla\subscr{\mcE}{recon}(x) = \Phi^\top\Pi\epsilon$:
$$ \tau\dot{x} \;=\; -x + \prox_{\mcR}\bigl(x + \Phi^\top\Pi\,\epsilon\bigr), $$
which is \eqref{eq:pgd-network}.

\textit{Step 3: Rewriting the proximal argument.}
Expand the argument of the proximal operator by substituting $\epsilon = y - \Phi x$:
\begin{align*}
x + \Phi^\top \Pi\, \epsilon
&\;=\; x + \Phi^\top \Pi (y - \Phi x) \\
&\;=\; (I_M - \Phi^\top \Pi \Phi)\,x \;+\; \Phi^\top \Pi\, y \\
&\;=:\; W\,x + \Phi^\top\Pi\,y,
\end{align*}
with $W := I_M - \Phi^\top \Pi \Phi$.  Substituting back gives
\eqref{eq:pgd-recurrent}.
\end{proof}

\begin{remark}[Convergence]\label{remark:convergence}
  Convergence properties of the proximal-gradient dynamics
  \eqref{eq:pgd-network}--\eqref{eq:pgd-recurrent} have been studied
  using several approaches.  Contraction-theoretic results in the
  firing-rate case appear in \citep{VC-AG-AD-GR-FB:23c}; contractivity for
  general proximal-gradient dynamics with strongly convex $\mcE$ is treated
  in \citep{AD-VC-AG-GR-FB:23f}; linear exponential convergence estimates
  appear in \citep{VC-AG-AD-GR-FB:23a,VC-AD-AG-GR-FB:24a}; and monotone
  decrease, forward invariance, and a proximal Polyak--{\L}ojasiewicz
  analysis of LASSO-type problems are given in \citep{AG-AD-FB:24d}.  An
  integral-quadratic-constraint approach to global exponential stability of
  proximal-gradient and Douglas--Rachford splitting flows is given
  by~\cite{SHM-MRJ:21}.  We do not revisit these results here.
\end{remark}

\subsection{Examples and Hopfield Connection}
\label{subsec:examples}

We illustrate Proposition~\ref{prop:firing-rate} with three examples.  The
first example revisits the positive competitive network
of~\citep{VC-AG-AD-GR-FB:23a} from a Laplace-plus-nonnegativity prior.  The
second example revisits the softmax gradient
descent~\citep{FR-VC-FB-GR:25k} generated by a shifted-entropy-barrier
prior and performs categorical inference.  The third example shows that the
classical Hopfield energy, after an appropriate transformation, can be
regarded as an instance of the generic regularized energy of
Subsection~\ref{sec:prelim-pgd}.

\paragraph{Sparse coding and the positive competitive network.}
Consider an overcomplete dictionary $\Phi \in \real^{N\times M}$ ($M>N$),
set $y = u$ (the input) and $\Pi = I_N$, and adopt the combined prior
$\mcR(x) = \lambda\|x\|_1 + \iota_{\{x\geq 0\}}$ (Laplace plus
nonnegativity).  The MAP objective becomes the positive LASSO problem
$\min_{x\geq 0}\tfrac{1}{2}\|u-\Phi x\|^2+\lambda\|x\|_1$, and
$\prox_{\mcR}(v) = \relu(v - \lambda\,\vectorones)$.
Proposition~\ref{prop:firing-rate} gives
\begin{equation}
  \tau\dot{x} \;=\; -x \;+\; \relu\bigl((I_M-\Phi^\top\Phi)\,x \;+\; \Phi^\top u \;-\; \lambda\,\vectorones\bigr), \label{eq:pcn}
\end{equation}
which is the \emph{positive competitive network} of
\citep{VC-AG-AD-GR-FB:23a}: a recurrent firing-rate dynamics with
nonnegative states, lateral inhibition, and a global silence threshold that
solves the positive LASSO.  Equation~\eqref{eq:pcn} is related to, but
distinct from, the locally-competitive algorithm (LCA) of
\citep{CJR-DHJ-RGB-BAO:08}.  In particular, the LCA evolves an internal
voltage variable and applies a soft-thresholding readout, whereas
\eqref{eq:pcn} evolves the firing rate directly and uses the proximal
operator of the Laplace-plus-nonnegativity prior.  In either form, the
effective synaptic matrix $W = I_M-\Phi^\top\Phi$ encodes lateral
inhibition.  Indeed, the off-diagonal entry $-\Phi_i^\top\Phi_j$ suppresses
neuron $j$ whenever its basis vector $\Phi_j$ overlaps with $\Phi_i$, which
automatically enforces a sparse code, and the global threshold $\lambda$
silences the neurons whose feedforward drive $[\Phi^\top u]_i$ falls below
it.

\paragraph{Categorical inference and the softmax.}
Consider the \emph{shifted entropy barrier} prior
$\mcR(x) = \entropybar(x) := -\theta\,\entropy(x) -
\tfrac{1}{2}\|x\|^2$ for $x\in\simplex{n}$, and $+\infty$ otherwise, with
temperature $\theta \geq 1$ (Table~\ref{tab:priors}).  By
\citep[Exercise~2.23]{PLC-JCP:20b} and \citep{FR-VC-FB-GR:25k}, the function
$\entropybar$ is closed, convex, and proper, and its proximal
operator is exactly the softmax $\softmax{\theta}$ of
Subsection~\ref{sec:prelim-priors}:
\begin{equation}
  \prox_{\entropybar}(v) \;=\; \softmax{\theta}(v) .
  \label{eq:softmax-prox}
\end{equation}
With prior $\mcR = \entropybar$,
Proposition~\ref{prop:firing-rate} yields
\begin{equation}
  \tau\,\dot{x} \;=\; -x \;+\; \softmax{\theta}\bigl(Wx + \Phi^\top\Pi y\bigr), \label{eq:softmax-dyn}
\end{equation}
with $W = I_M - \Phi^\top\Pi\Phi$.  At equilibrium,
$x^*_i \propto \exp(u^*_i/\theta)$, where $u^* = Wx^* + \Phi^\top\Pi y$ is
the net drive; that is, the steady state is a \emph{Gibbs distribution} over
the categories, and the temperature $\theta$ controls its sharpness.  A
small $\theta$ sharpens the output toward the argmax (categorical
perception), whereas a large $\theta$ broadens it toward the uniform
distribution (maximal uncertainty).

\paragraph{Regularized energy as an extended Hopfield energy.}
Let $y$ be a vector of membrane potentials and let $\sigma$ be a smooth and
monotonically increasing scalar activation applied component-wise.  In this
paragraph only, $W=W^\top$ denotes the classical Hopfield recurrent weight
matrix (distinct from the predictive-coding dictionary $\Phi$ used
elsewhere).  The classical Hopfield model and its Lyapunov energy are
\begin{equation}
  \dot{y} \;=\; -y \;+\; W\sigma(y) ,
  \qquad
  \subscr{\mcE}{Hopfield}(y)
  \;=\; -\tfrac{1}{2}\,\sigma(y)^\top W \sigma(y)
  \;+\; \sum_{i=1}^{n} \int_0^{\sigma_i(y_i)} \sigma_i^{-1}(z)\,dz .
  \label{eq:hopfield-y}
\end{equation}
After the change of variables to the firing rate $x = \sigma(y)$, the
energy reads~\citep{SB-GB-FB-SZ:24m}
\begin{equation}
  \subscr{\mcE}{Hopfield}(x)
  \;=\; -\tfrac{1}{2}\,x^\top W x
  \;+\; \sum_{i=1}^{n} \int_0^{x_i} \sigma_i^{-1}(z)\,dz
  \;+\; \iota_{\operatorname{image}(\sigma)}(x) ,
  \label{eq:hopfield}
\end{equation}
where the indicator $\iota_{\operatorname{image}(\sigma)}$ enforces that
$x$ stays in the range of $\sigma$.  Loosely speaking (other splittings are
possible), energy~\eqref{eq:hopfield} has the generic regularized form
$\mcE + \mcR$ of Subsection~\ref{sec:prelim-pgd}: the smooth part is the
quadratic $\mcE(x) = -\tfrac{1}{2}x^\top W x$, and the regularizer
\begin{equation}
  \mcR(x) \;=\; \sum_{i=1}^{n} \int_0^{x_i} \sigma_i^{-1}(z)\,dz
  \;+\; \iota_{\operatorname{image}(\sigma)}(x)
\end{equation}
has the activation $\sigma$ as its proximal operator, by the convex
conjugate (Legendre transform) pairing between $\mcR$ and $\sigma$.
Proximal gradient descent on the regularized energy~\eqref{eq:hopfield}
therefore gives rise to a firing-rate neural network that corresponds to the
membrane-potential Hopfield model~\eqref{eq:hopfield-y}.  The same
construction extends the classical Hopfield energy to nonsmooth and
constrained activations: each row of Table~\ref{tab:priors} substitutes a
different $\sigma$ (soft-threshold for the Laplace prior, ReLU for
nonnegativity, softmax for the shifted-entropy-barrier) by the same
Legendre-pair mechanism.

%% \clearpage

\section{Hierarchical Predictive Coding as a Network of Leaky Firing-Rate Networks}
\label{sec:hier}

The cortex is hierarchical, and predictive coding extends to multiple levels
by stacking single-level circuits \citep{RPNR-DHB:99, KJF:08,
AMB-WMU-RAA-GRM-PF-KJF:12}.  We present the hierarchy in four steps: the
centralized MAP problem for a deep linear generative model
(§\ref{sec:centralized}); its variable-splitting reformulation as
hierarchical predictive coding (§\ref{sec:varsplit}); a probabilistic
graphical model perspective, which shows that the same energy arises from
conditional independence (§\ref{sec:pgm}); and the distributed proximal
gradient descent dynamics that result (§\ref{sec:hierdyn}).

\subsection{The Centralized MAP Problem}\label{sec:centralized}
Consider a generative model with a latent variable
$z \in \real^{M_L}$ and a composite deep linear forward map,
$$ y \;=\;
\Phi^{(1)} \Phi^{(2)} \cdots \Phi^{(L)}\, z \;+\; \eta, \qquad \eta \sim \mathcal{N}\bigl(0,\; \Pi^{-1}\bigr), $$
and prior $p(z) \propto \exp(-\mcR(z))$.  MAP inference solves
\begin{equation}
    \min_{z} \;\; \tfrac{1}{2}\bigl\| y - \Phi^{(1)} \Phi^{(2)} \cdots \Phi^{(L)}\, z \bigr\|_{\Pi}^2 \;+\; \mcR(z) .
\label{eq:centralized}
\end{equation}
Three features of \eqref{eq:centralized} are problematic for a brain-like
solver.  First, the gradient requires the transpose of the full product,
$\Phi^{(L)\top}\cdots \Phi^{(1)\top}$, which is nonlocal.  Second, the deep
products $\Phi^{(1)}\cdots \Phi^{(L)}$ are typically poorly conditioned.
Third, only the top-level latent $z$ is regularized, which is inconsistent
with the various statistical properties observed at every level of the
visual cortex.  Variable splitting removes all three drawbacks.

\subsection{Variable Splitting: From Flat to Hierarchical}\label{sec:varsplit}
We introduce one auxiliary variable $x^{(\ell)} \in \real^{M_\ell}$ per
level, with $x^{(L)} \equiv z$ and $x^{(0)} \equiv y$.  The constrained
problem
\begin{equation}
    \min_{x^{(1:L)}} \;\; \tfrac{1}{2}\bigl\|y - \Phi^{(1)} x^{(1)}\bigr\|_\Pi^2 + \mcR\bigl(x^{(L)}\bigr) \quad \text{s.t.}\quad x^{(\ell-1)} = \Phi^{(\ell)} x^{(\ell)}, \;\;
\ell = 2, \ldots, L,
    \label{eq:constrained}
\end{equation}
is equivalent to \eqref{eq:centralized}, because the elimination of the
constraints by forward substitution reconstructs the deep product.  We then
relax each exact constraint into a Gaussian penalty with level-wise
precision $\Pi^{(\ell)}$ and add a level-wise prior $\mcR^{(\ell)}$ at each
level to obtain:
\begin{equation}
    \min_{x^{(1:L)}} \;\;
    \subscr{\mcE}{H-MAP}\bigl(x^{(1:L)}\bigr) \;:=\;
\underbrace{\sum_{\ell=1}^{L} \tfrac{1}{2}\,\bigl\|x^{(\ell-1)} - \Phi^{(\ell)} x^{(\ell)}\bigr\|_{\Pi^{(\ell)}}^2}_{\subscr{\mcE}{recon}(x^{(1:L)})} \;+\; \sum_{\ell=1}^{L} \mcR^{(\ell)}\bigl(x^{(\ell)}\bigr).
    \label{eq:hier-map}
\end{equation}
We refer to $\subscr{\mcE}{H-MAP}$ as the \emph{hierarchical MAP energy}, the
defining energy of hierarchical predictive coding \citep{RPNR-DHB:99, KJF:08}.  For finite precisions
$\Pi^{(\ell)}$, equation \eqref{eq:hier-map} is an approximation of
\eqref{eq:constrained}, and not an equivalence; only in the limit
$\Pi^{(\ell)} \to \infty$ are the hard constraints of
\eqref{eq:constrained} recovered, and, with $\mcR^{(\ell)} \equiv 0$ for
$\ell < L$, we further recover \eqref{eq:centralized}.  The transformation
from \eqref{eq:centralized} to \eqref{eq:hier-map} is therefore best
understood as a classical variable-splitting relaxation of the Alternating
Direction Method of Multipliers (ADMM) type \citep{SP-NP-EC-NP-JE:10,
  NP-SB:14, PLC-JCP:11}.  The same auxiliary-variable device underlies the
layer-wise training of deep models \citep{MCP-WW:14, GT-AB-MX-MS-AP-TG:16}.

\subsection{A Probabilistic Graphical Model Perspective}\label{sec:pgm}

The hierarchical energy $\subscr{\mcE}{H-MAP}$ defined
in~\eqref{eq:hier-map} admits a probabilistic interpretation that supports
the locality of the proximal-gradient dynamics directly, without the use of
variable splitting, provided that the underlying graphical model is read
with care.  We emphasize at the outset that the variable-splitting
derivation of §\ref{sec:varsplit} and the present probabilistic graphical
model reading correspond to two distinct probabilistic models that share
the same energy function, and not to two derivations of the same model.
The derivation of §\ref{sec:varsplit} treats $\subscr{\mcE}{H-MAP}$ as a
finite-precision relaxation of the directed-chain MAP problem
\eqref{eq:centralized}, whereas the present section takes
$\subscr{\mcE}{H-MAP}$ to be the exact joint energy of an undirected Markov
random field.

A directed generative chain $x^{(L)} \to \cdots \to x^{(1)} \to y$ with
linear-Gaussian conditionals $p(x^{(\ell)}\mid x^{(\ell+1)}) =
\mathcal{N}(\Phi^{(\ell+1)}x^{(\ell+1)},(\Pi^{(\ell+1)})^{-1})$ and
top-level prior $p(x^{(L)})$ fixes the joint distribution $p(y, x^{(1:L)})$
completely.  The marginal of every intermediate $x^{(\ell)}$, with $\ell <
L$, is then \emph{induced} by the chain, and it cannot be replaced by a
separate ``independent'' prior $\exp(-\mcR^{(\ell)}(x^{(\ell)}))$ without
changing the model.  The augmentation of the directed chain by intermediate
node factors $\mcR^{(\ell)}$, $\ell < L$, therefore requires a different
probabilistic interpretation.

The mathematically consistent interpretation is as an \emph{undirected
Markov random field} (MRF) (equivalently, a chain-structured factor graph
or energy-based model),
\begin{equation}
  p(y, x^{(1:L)}) \;=\; \frac{1}{Z}\,\exp\bigl(-\subscr{\mcE}{H-MAP}(x^{(1:L)})\bigr) ,
  \label{eq:mrf-joint}
\end{equation}
with normalizing constant $Z$ and the energy of \eqref{eq:hier-map}
read as
\begin{equation}
  \subscr{\mcE}{H-MAP}(x^{(1:L)}) \;=\;
  \sum_{\ell=1}^{L} \underbrace{\tfrac{1}{2}\bigl\|x^{(\ell-1)} - \Phi^{(\ell)}x^{(\ell)}\bigr\|_{\Pi^{(\ell)}}^2}_{\text{pairwise potential on edge }(\ell-1,\ell)} \;+\; \sum_{\ell=1}^{L} \underbrace{\mcR^{(\ell)}(x^{(\ell)})}_{\text{node potential at vertex }\ell} ,
\end{equation}
with $x^{(0)} \equiv y$ clamped.  Each pairwise potential encodes the
linear-Gaussian compatibility between adjacent levels, and each node
potential $\mcR^{(\ell)}$ encodes the level-wise constraint or regularizer
whose proximal operator delivers the activation function at level $\ell$
(Table~\ref{tab:priors}).  The MRF is well-defined whenever $Z<\infty$,
which holds in particular when every $\mcR^{(\ell)}$ is proper, closed, and
convex with bounded effective domain or with at-least-linear coercive
growth.

The chain MRF has the Markov property: each $x^{(\ell)}$ is conditionally
independent of all other levels given its two neighbors.  Therefore, the
level-$\ell$ gradient $\nabla_{x^{(\ell)}}\subscr{\mcE}{recon}$ of the
reconstruction energy introduced in \eqref{eq:hier-map} depends only on
$\epsilon^{(\ell)}$ and $\epsilon^{(\ell+1)}$.  The locality of the
proximal-gradient drive $u^{(\ell)}$ in Proposition~\ref{prop:hier-pgd} is
therefore a probabilistic consequence of the MRF structure.  The directed
chain is recovered as the special case $\mcR^{(\ell)} \equiv 0$ for $\ell < L$
and $\mcR^{(L)} = -\log p(x^{(L)})$; for general intermediate
$\mcR^{(\ell)}$, however, no directed factorization exists.  Since $Z$ does
not depend on $x^{(1:L)}$, MAP inference on the MRF coincides with energy
minimization, and the hierarchical proximal-gradient dynamics
\eqref{eq:hier-pgd} solve the same optimization problem in both readings.

\begin{remark}[Comparison]
  Three features distinguish the MRF interpretation from the variable-splitting
  derivation of §\ref{sec:varsplit}.  First, locality in the MRF is grounded in the
  conditional independence of an explicit joint distribution, rather than in
  algebraic decoupling.  Second, the level-wise $\mcR^{(\ell)}$ for
  $\ell < L$ are not Bayesian priors on independent variables, which would
  over-specify the directed chain, but rather node potentials of the chain
  MRF~\eqref{eq:mrf-joint}.  Third, $\prox_{\mcR^{(\ell)}}$ is
  active at every level, rather than only at the top level $\ell=L$, and
  this allows different areas to implement different activation functions
  (Table~\ref{tab:priors}).
\end{remark}

\subsection{Hierarchical Proximal Gradient Descent Dynamics}\label{sec:hierdyn}

We apply continuous-time proximal gradient descent level-wise to the
hierarchical MAP energy \eqref{eq:hier-map}, with the sensory data clamped
as $x^{(0)} \equiv y$.  This yields the following distributed system, which
is illustrated in Figure~\ref{fig:pgd-arch}.
\begin{proposition}[Hierarchical Proximal Gradient Descent]\label{prop:hier-pgd}
Fix the following data:
\begin{itemize}
\item synaptic weight matrices $\Phi^{(\ell)} \in \real^{M_{\ell-1} \times M_\ell}$ for $\ell \in \until{L}$,
  for an integer $L \geq 1$ and dimensions $M_0, M_1, \ldots, M_L \geq 1$;
\item likelihood precisions $\Pi^{(\ell)} \in \real^{M_{\ell-1} \times M_{\ell-1}}$, symmetric positive-definite, for $\ell \in \until{L}$;
\item regularizers / negative log-priors $\mcR^{(\ell)} : \real^{M_\ell} \to \real \cup \{+\infty\}$, proper, closed, and convex, for $\ell \in \until{L}$;
\item time constants $\tau^{(\ell)} > 0$ for $\ell \in \until{L}$;
\item a fixed observation $y \in \real^{M_0}$, with the convention $x^{(0)} \equiv y$.
\end{itemize}
Define the \emph{hierarchical MAP energy}
$$ \subscr{\mcE}{H-MAP}\bigl(x^{(1:L)}\bigr) \;:=\;
\sum_{\ell=1}^{L} \tfrac{1}{2}\bigl\|x^{(\ell-1)} - \Phi^{(\ell)} x^{(\ell)}\bigr\|_{\Pi^{(\ell)}}^2 \;+\; \sum_{\ell=1}^{L} \mcR^{(\ell)}\bigl(x^{(\ell)}\bigr) , $$
the \emph{level-wise prediction errors}
$\epsilon^{(\ell)} := x^{(\ell-1)} - \Phi^{(\ell)} x^{(\ell)} \in \real^{M_{\ell-1}}$,
and the \emph{net precision-weighted error currents}
$$ u^{(\ell)} \;:=\;
\Phi^{(\ell)\top}\Pi^{(\ell)}\epsilon^{(\ell)} \;-\; \Pi^{(\ell+1)}\epsilon^{(\ell+1)} \quad (\ell < L), \qquad u^{(L)} \;:=\; \Phi^{(L)\top}\Pi^{(L)}\epsilon^{(L)} .
$$
Then continuous-time proximal-gradient descent on
$\subscr{\mcE}{H-MAP}$ is the level-wise distributed system
\begin{equation}
\tau^{(\ell)}\,\dot{x}^{(\ell)} \;=\;
-x^{(\ell)} + \prox_{\mcR^{(\ell)}}\bigl(x^{(\ell)} + u^{(\ell)}\bigr), \qquad \ell \in \until{L} .
\label{eq:hier-pgd}
\end{equation}
Each level-$\ell$ update is an instance of \eqref{eq:pgd-network}
driven by $u^{(\ell)}$.  Equivalently, stacking
$\mathbf{x} = (x^{(1)}, \ldots, x^{(L)})$ and defining the
block-lower-bidiagonal matrix
$$
\boldsymbol{\Phi} \;:=\;
\begin{pmatrix}
\Phi^{(1)} & & & \\
-I_{M_1} & \Phi^{(2)} & & \\
& -I_{M_2} & \ddots & \\
& & \ddots & \Phi^{(L)}
\end{pmatrix}
\;\in\; \real^{(M_0+\cdots+M_{L-1})\times(M_1+\cdots+M_L)},
$$
where the diagonal block $(\ell,\ell)$ is
$\Phi^{(\ell)}\in\real^{M_{\ell-1}\times M_\ell}$ and the
sub-diagonal block $(\ell{+}1,\ell)$ is
$-I_{M_\ell}\in\real^{M_\ell\times M_\ell}$, the stacked error
vector satisfies
$$
\boldsymbol{\epsilon} \;:=\; \bigl(\epsilon^{(1)},\ldots,\epsilon^{(L)}\bigr)
\;=\; \mathbf{b} - \boldsymbol{\Phi}\mathbf{x}, \qquad
\mathbf{b} \;:=\; (y,\,0,\,\ldots,\,0)^\top ,
$$
and with
$\boldsymbol{\Pi} = \operatorname{blockdiag}(\Pi^{(\ell)})$ and
$\prox_{\boldsymbol{\mcR}}$ block-separable,
the dynamics takes the single block-vector form
\begin{equation}
\boldsymbol{\tau}\,\dot{\mathbf{x}} \;=\;
-\,\mathbf{x} \;+\; \prox_{\boldsymbol{\mcR}}\bigl(
(I_{M_1+\cdots+M_L} - \boldsymbol{\Phi}^\top\boldsymbol{\Pi}\boldsymbol{\Phi})\,\mathbf{x}
\;+\; \boldsymbol{\Phi}^\top\boldsymbol{\Pi}\,\mathbf{b}
\bigr). \label{eq:hier-pgd-x}
\end{equation}
\emph{Hierarchical predictive coding is therefore itself a single
proximal-gradient system on the joint energy $\subscr{\mcE}{H-MAP}$.}
\end{proposition}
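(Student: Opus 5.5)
The plan is to reduce everything to the generic dynamics \eqref{eq:pgd-generic} applied to the splitting $\mcE = \subscr{\mcE}{recon}$ (the quadratic coupling part of $\subscr{\mcE}{H-MAP}$) and $\mcR(\mathbf{x}) = \sum_{\ell} \mcR^{(\ell)}(x^{(\ell)})$ on the stacked variable $\mathbf{x}$. Three facts are needed. First, $\subscr{\mcE}{recon}$ is smooth with Lipschitz gradient, by the same argument as Lemma~\ref{lem:gauss-map}. Second, its partial gradients are local. Third, the proximal operator of a block-separable regularizer is block-separable. Together these give both the level-wise form \eqref{eq:hier-pgd} and the stacked form \eqref{eq:hier-pgd-x}.

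\emph{Step 1 (local gradients).} The variable $x^{(\ell)}$ appears in exactly two summands of $\subscr{\mcE}{recon}$: term $\ell$, as $-\Phi^{(\ell)}x^{(\ell)}$, and term $\ell+1$, as the ``observation'' $x^{(\ell)}$ when $\ell<L$. Differentiating as in Step~1 of the proof of Proposition~\ref{prop:firing-rate} gives
\[
\nabla_{x^{(\ell)}}\subscr{\mcE}{recon} = -\Phi^{(\ell)\top}\Pi^{(\ell)}\epsilon^{(\ell)} + \Pi^{(\ell+1)}\epsilon^{(\ell+1)} \quad (\ell<L),
\]
with the second term absent when $\ell=L$. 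Hence $-\nabla_{x^{(\ell)}}\subscr{\mcE}{recon} = u^{(\ell)}$. Here $x^{(0)}\equiv y$ is clamped and is not a variable, so no gradient with respect to it is taken.

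\emph{Step 2 (separability of the prox).} The objective $\sum_\ell \mcR^{(\ell)}(z^{(\ell)}) + \tfrac12\sum_\ell\|z^{(\ell)}-v^{(\ell)}\|_2^2$ decouples across blocks. Therefore $\prox_{\boldsymbol{\mcR}}(\mathbf{v}) = (\prox_{\mcR^{(1)}}(v^{(1)}),\ldots,\prox_{\mcR^{(L)}}(v^{(L)}))$, and $\boldsymbol{\mcR}$ is again proper, closed and convex. Substituting into \eqref{eq:pgd-generic} with $\mathbf{v} = \mathbf{x}-\nabla\subscr{\mcE}{recon}(\mathbf{x})$ and reading off block $\ell$ yields \eqref{eq:hier-pgd}. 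Comparing with \eqref{eq:pgd-network}, each level is an instance of the single-level network in which the drive $\Phi^\top\Pi\epsilon$ is replaced by $u^{(\ell)}$.

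\emph{Step 3 (stacked form).} I would verify $\boldsymbol{\epsilon} = \mathbf{b}-\boldsymbol{\Phi}\mathbf{x}$ block row by block row. Row $1$ gives $y-\Phi^{(1)}x^{(1)}$, and row $\ell\ge2$ gives $0-(-x^{(\ell-1)}+\Phi^{(\ell)}x^{(\ell)})$. It follows that $\subscr{\mcE}{recon}(\mathbf{x}) = \tfrac12\|\mathbf{b}-\boldsymbol{\Phi}\mathbf{x}\|_{\boldsymbol{\Pi}}^2$, a single-level energy of the form in Lemma~\ref{lem:gauss-map}. Step~3 of the proof of Proposition~\ref{prop:firing-rate}, applied verbatim with $(\Phi,\Pi,y)\mapsto(\boldsymbol{\Phi},\boldsymbol{\Pi},\mathbf{b})$, then gives \eqref{eq:hier-pgd-x}. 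As a consistency check, $\boldsymbol{\Phi}^\top\boldsymbol{\Pi}\boldsymbol{\epsilon}$ should reproduce $(u^{(1)},\ldots,u^{(L)})$: the transpose of the $-I$ subdiagonal is what produces the $-\Pi^{(\ell+1)}\epsilon^{(\ell+1)}$ terms.

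The main obstacle is conceptual rather than computational: making precise what ``proximal-gradient descent'' means when the time constants differ across levels. \eqref{eq:pgd-generic} has a single scalar $\tau$, whereas here $\boldsymbol{\tau} = \operatorname{blockdiag}(\tau^{(\ell)}I_{M_\ell})$. I would handle this by defining the dynamics as \eqref{eq:pgd-generic} with $\tau$ replaced by the positive-definite block-diagonal $\boldsymbol{\tau}$, and then observing why this is harmless. The right-hand side vector field $-\mathbf{x}+\prox_{\boldsymbol{\mcR}}(\mathbf{x}-\nabla\subscr{\mcE}{recon}(\mathbf{x}))$ is unchanged, since the prox step uses unit step size in the Euclidean metric. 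Consequently the equilibria, which are the fixed points characterizing minimizers of $\subscr{\mcE}{H-MAP}$, are exactly the same. Block-diagonal positive rescaling of time only changes the per-level speed and does not affect the identity of the vector field.
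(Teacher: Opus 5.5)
Your proposal is correct and follows essentially the same route as the paper's proof: level-wise gradients of $\subscr{\mcE}{recon}$ yielding $u^{(\ell)}$ (with the top-level and clamped bottom-level boundary cases), block-separability of $\prox_{\boldsymbol{\mcR}}$, and the stacked form via $\boldsymbol{\epsilon} = \mathbf{b} - \boldsymbol{\Phi}\mathbf{x}$. The only differences are cosmetic. You get the stacked gradient by writing $\subscr{\mcE}{recon}(\mathbf{x}) = \tfrac{1}{2}\|\mathbf{b}-\boldsymbol{\Phi}\mathbf{x}\|_{\boldsymbol{\Pi}}^2$ and reusing Proposition~\ref{prop:firing-rate}, whereas the paper reads $-\nabla_{\mathbf{x}}\subscr{\mcE}{recon} = \boldsymbol{\Phi}^\top\boldsymbol{\Pi}\boldsymbol{\epsilon}$ off the block rows of $\boldsymbol{\Phi}^\top$; your block-diagonal $\boldsymbol{\tau}$ is the convention the paper adopts implicitly.
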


\begin{proof}
\textit{Step 1: Identify the smooth part of the hierarchical energy.}
The joint energy \eqref{eq:hier-map} splits as
$\subscr{\mcE}{H-MAP} = \subscr{\mcE}{recon} + \sum_\ell \mcR^{(\ell)}$,
where the smooth (differentiable) part is
$$ \subscr{\mcE}{recon} \;=\; \sum_{\ell=1}^{L} \mcE^{(\ell)}, \qquad \mcE^{(\ell)}\bigl(x^{(\ell-1)}, x^{(\ell)}\bigr) \;:=\;
\tfrac{1}{2}\bigl\|x^{(\ell-1)} - \Phi^{(\ell)} x^{(\ell)}\bigr\|_{\Pi^{(\ell)}}^2 $$
and the nonsmooth part $\sum_\ell \mcR^{(\ell)}$ is separable across
levels.  Define the level-$\ell$ prediction error
$\epsilon^{(\ell)} := x^{(\ell-1)} - \Phi^{(\ell)} x^{(\ell)}$.

\textit{Step 2: Compute the gradient with respect to each level.}
For an intermediate level $1 \leq \ell \leq L-1$, the variable
$x^{(\ell)}$ participates in exactly two energy terms:
\begin{itemize}
    \item $\mcE^{(\ell)}$, where $x^{(\ell)}$ appears as the \emph{generator} of the prediction at level $\ell{-}1$: $\nabla_{x^{(\ell)}} \mcE^{(\ell)} = -\Phi^{(\ell)\top}\Pi^{(\ell)}\epsilon^{(\ell)}$.
    \item $\mcE^{(\ell+1)}$, where $x^{(\ell)}$ appears as the \emph{prediction target} from level $\ell{+}1$: $\nabla_{x^{(\ell)}} \mcE^{(\ell+1)} = \Pi^{(\ell+1)}\epsilon^{(\ell+1)}$.
\end{itemize}
Combining (and negating to get the descent direction):
$$-\nabla_{x^{(\ell)}} \subscr{\mcE}{recon} \;=\; \underbrace{\Phi^{(\ell)\top}\Pi^{(\ell)}\epsilon^{(\ell)}}_{\text{bottom-up drive}} \;-\; \underbrace{\Pi^{(\ell+1)}\epsilon^{(\ell+1)}}_{\text{top-down correction}} \;=:\; u^{(\ell)}.$$

\textit{Step 3: Boundary conditions.}
\emph{Top level $\ell = L$:} There is no term $\mcE^{(L+1)}$, so the
top-down correction is absent:
$-\nabla_{x^{(L)}} \subscr{\mcE}{recon} = \Phi^{(L)\top}\Pi^{(L)}\epsilon^{(L)} =: u^{(L)}.$
\emph{Bottom level:} $x^{(0)} \equiv y$ is clamped and enters only as
a constant in $\epsilon^{(1)} = y - \Phi^{(1)} x^{(1)}$.

\textit{Step 4: Apply the proximal gradient update level-wise.}
The nonsmooth term $\sum_\ell \mcR^{(\ell)}(x^{(\ell)})$ is separable
across levels, so its proximal operator is block-separable:
$$ \prox_{\boldsymbol{\mcR}}\bigl(\mathbf{x}\bigr) \;=\; \bigl(\prox_{\mcR^{(\ell)}}(x^{(\ell)})\bigr)_{\ell=1}^L. $$
The generic continuous-time proximal gradient descent rule
$\tau\dot{x} = -x + \prox_{\mcR}(x - \nabla\mcE(x))$
applied block-level-wise with time constant $\tau^{(\ell)}$ yields
\eqref{eq:hier-pgd}.

\textit{Step 5: Block-vector form.}
Stacking $\mathbf{x}$ and
$\boldsymbol{\epsilon} = (\epsilon^{(1)},\ldots,\epsilon^{(L)})$,
inspection of the block rows of $\boldsymbol{\Phi}^\top$ gives
$-\nabla_\mathbf{x}\subscr{\mcE}{recon} = \boldsymbol{\Phi}^\top\boldsymbol{\Pi}\boldsymbol{\epsilon}$.
Since $\boldsymbol{\epsilon} = \mathbf{b} - \boldsymbol{\Phi}\mathbf{x}$, the
smooth gradient equals
$\boldsymbol{\Phi}^\top\boldsymbol{\Pi}(\mathbf{b}-\boldsymbol{\Phi}\mathbf{x})$, and
substituting into the level-wise update yields \eqref{eq:hier-pgd-x}.
\end{proof}

%% \begin{remark}[Convergence of the hierarchical dynamics]
%%   As in the single-level case, convergence properties of the hierarchical
%%   proximal-gradient dynamics \eqref{eq:hier-pgd}--\eqref{eq:hier-pgd-x} can
%%   be analyzed through several approaches, with the block Gram matrix
%%   $\boldsymbol{\Phi}^\top\boldsymbol{\Pi}\boldsymbol{\Phi}$ playing the role of the
%%   single-level Gram; see Remark~\ref{remark:convergence}.
%% \end{remark}

\tikzset{
  pgdfig/box/.style={draw, rounded corners=2pt, align=center, inner sep=4pt},
  pgdfig/pe/.style={pgdfig/box, fill=gnblue4!25},
  pgdfig/sum/.style={draw, circle, minimum size=6mm, inner sep=0pt}
}

\begin{figure}[htbp]
\centering
%% =============== Panel (a): vertical hierarchy ===============
\begin{subfigure}[t]{0.36\textwidth}
 \centering
\begin{tikzpicture}[font=\small, >=Latex, thick,
    box/.style={pgdfig/box}, pe/.style={pgdfig/pe}, sum/.style={pgdfig/sum}]
\node (input) at (0, -3.0) {Input $y \equiv x^{(0)}$};
\node[sum] (s1) at (0, -1.6) {};
\node[pe, minimum width=2.4cm] (pe1) at (0,  0.0)
     {Module 1\\[2pt]\scriptsize(Laplace Prior)};
\node[pe, minimum width=2.4cm] (pe2) at (0,  2.2)
    {Module 2\\[2pt]\scriptsize(Non-negative Prior)};
\node[pe, minimum width=2.4cm] (pe3) at (0,  4.4)
     {Module 3\\[2pt]\scriptsize(Entropic Prior)};
\draw[->] (input) -- node[left=1pt, font=\scriptsize, pos=0.82]{$-$} (s1.south);
\draw[->] (s1.110)
    -- node[left, font=\scriptsize]{$\Pi^{(1)}\epsilon^{(1)}$}
       ([xshift=-4mm]pe1.south);
\foreach \below/\above/\i in {pe1/pe2/2, pe2/pe3/3} {
    \draw[->] ([xshift=-4mm]\below.north)
        -- node[left, font=\scriptsize]{$\Pi^{(\i)}\epsilon^{(\i)}$}
           ([xshift=-4mm]\above.south);
}
\foreach \above/\below/\i in {pe2/pe1/2, pe3/pe2/3} {
    \draw[->] ([xshift=4mm]\above.south)
        -- node[right, font=\scriptsize]{$\Phi^{(\i)} x^{(\i)}$}
           ([xshift=4mm]\below.north);
}
\draw[->] ([xshift=4mm]pe1.south)
    -- node[right, font=\scriptsize, pos=0.5]{$\Phi^{(1)} x^{(1)}$}
       node[right, font=\scriptsize, pos=0.88]{$+$}
       (s1.70);
\node[single arrow, rotate=90,
      draw=gnblue3, fill=gnblue3!15,
      minimum width=0.9cm, minimum height=7.0cm,
      single arrow head extend=3mm,
      font=\small, text=gnblue3, align=center]
      at (-1.9, 1.0) {bottom-up feedforward errors};
\node[single arrow, rotate=-90,
      draw=gnbrown4, fill=gnbrown4!15,
      minimum width=0.9cm, minimum height=7.0cm,
      single arrow head extend=3mm,
      font=\small, text=gnbrown4, align=center]
      at (1.9, 1.0) {top-down feedback predictions};
\end{tikzpicture}
 \caption{Vertical hierarchy with 3 representative levels. Feedforward
   pathways (left) carry the precision-weighted prediction errors
   $\Pi^{(\ell)}\epsilon^{(\ell)}$ upward; feedback pathways (right)
   carry the predictions $\Phi^{(\ell)}x^{(\ell)}$ downward, following
   \citep[Fig.\ 1]{RPNR-DHB:99}.}
 \label{fig:pgd-arch-a}
 \end{subfigure}
\hfill
%% =============== Panel (b): module at generic level ell ===============
\begin{subfigure}[t]{0.60\textwidth}
\centering
\resizebox{\linewidth}{!}{%
\begin{tikzpicture}[font=\small, >=Latex, thick,
    box/.style={pgdfig/box}, pe/.style={pgdfig/pe}, sum/.style={pgdfig/sum}]

\node[box, minimum width=5.0cm, minimum height=1.6cm] (state) at (4.7, 0)
    {$\tau^{(\ell)}\dot{x}^{(\ell)} = -x^{(\ell)}$\\[2pt]
     $+\;\prox_{\mcR^{(\ell)}}\bigl(x^{(\ell)} + u^{(\ell)}\bigr)$};
\node[sum] (uSum) at (0.8, 0) {};
\node[box] (WT)   at (0.8, -1.5) {$\Phi^{(\ell)\top}$};
\node[sum] (sEps) at (8.5, 1.5) {};
\node[box] (PiUp) at (4.7, 1.5) {$\Pi^{(\ell+1)}$};
\node[box] (Wfb)  at (8.5, -1.5) {$\Phi^{(\ell)}$};

\coordinate (xbranch) at (8.5,  0);
\coordinate (piLeft)  at (0.8,  1.5);

\draw[->, draw=gnblue3] (0.8, -3.0)
    -- node[right=2pt, font=\scriptsize, pos=0.5, align=left, text=black]{$\Pi^{(\ell)}\epsilon^{(\ell)}$} (WT.south);
\draw[->] (WT.north)
    -- node[right=1pt, font=\scriptsize, pos=0.88]{$+$} (uSum.south);
\draw[->] (uSum.east)
    -- node[above, font=\scriptsize]{$u^{(\ell)}$} (state.west);
\draw[->] (state.east)
    -- node[above, font=\scriptsize]{$x^{(\ell)}$} (xbranch);
\draw[->] (xbranch)
    -- node[right=1pt, font=\scriptsize, pos=0.88]{$+$} (sEps.south);
\draw[->] (xbranch) -- (Wfb.north);
\draw[->, draw=gnbrown4] (Wfb.south)
    -- node[right=2pt, font=\scriptsize, pos=0.5, align=left, text=gnbrown4]{$\Phi^{(\ell)}x^{(\ell)}$} (8.5, -3.0);
\draw[->, draw=gnbrown4] (8.5, 3.0)
    -- node[right=2pt, font=\scriptsize, pos=0.5, align=left, text=black]{$\Phi^{(\ell+1)}x^{(\ell+1)}$}
       node[right=1pt, font=\scriptsize, pos=0.88, text=black]{$-$}
       (sEps.north);
\draw[->] (sEps.west)
    -- node[above, font=\scriptsize]{$\epsilon^{(\ell+1)}$} (PiUp.east);
\draw[->] (PiUp.west) -- (piLeft);
\draw[->, draw=gnblue3] (piLeft)
    -- node[right=2pt, font=\scriptsize, pos=0.6, align=left, text=gnblue3]{$\Pi^{(\ell+1)}\epsilon^{(\ell+1)}$} (0.8, 3.0);
\draw[->] (piLeft)
    -- node[left=1pt, font=\scriptsize, pos=0.88]{$-$} (uSum.north);

\fill (xbranch) circle (2pt);
\fill (piLeft)  circle (2pt);

\node[font=\scriptsize, anchor=north] at (0.8, -3.0) {from module $(\ell{-}1)$};
\node[font=\scriptsize, anchor=south] at (0.8,  3.0) {to module $(\ell{+}1)$};
\node[font=\scriptsize, anchor=south] at (8.5,  3.0) {from module $(\ell{+}1)$};
\node[font=\scriptsize, anchor=north] at (8.5, -3.0) {to module $(\ell{-}1)$};

\begin{scope}[on background layer]
\node[draw, thin, rounded corners=8pt, inner sep=4mm,
      fit=(WT)(uSum)(sEps)(PiUp)(state)(Wfb),
      label={[font=\scriptsize]below:module at level $\ell$}] (peBox) {};
\end{scope}
\end{tikzpicture}%
}
\caption{System theoretic module at generic level $\ell$.
  Bottom-left: $\Pi^{(\ell)}\epsilon^{(\ell)}$ projected by $\Phi^{(\ell)\top}$
  feeds the $+$ port of the left junction.
  Top-right: $\Phi^{(\ell+1)}x^{(\ell+1)}$ is subtracted from $x^{(\ell)}$,
  precision-weighted by $\Pi^{(\ell+1)}$, and fed to the $-$ port.
  Net drive $u^{(\ell)}$ enters the leaky-prox state equation.
  Outgoing: $\Pi^{(\ell+1)}\epsilon^{(\ell+1)}$ (top-left) and
  $\Phi^{(\ell)}x^{(\ell)}$ (bottom-right).}
\label{fig:pgd-arch-b}
\end{subfigure}
\caption{Hierarchical predictive coding recast as proximal gradient dynamics.}
\label{fig:pgd-arch}
\end{figure}

%% \paragraph{Per-level time constants and the cortical timescale hierarchy.}
%% The per-level time constants $\tau^{(\ell)}$ in \eqref{eq:hier-pgd} acquire
%% a concrete biological meaning from the empirically established gradient of
%% intrinsic timescales along the cortical hierarchy --- from $\sim100$~ms
%% in sensory cortex to $\sim500$~ms in prefrontal cortex
%% \cite{JDM-AB-DJF-RR-JDW-XC-CPS-TP-HS-DL-XJW:14, SJK-JD-KJF:08,
%%   RC-AK-MMG-HK-XJW:15} --- so a monotone-increasing $\tau^{(\ell)}$ is the
%% natural algorithmic correlate of slower temporal integration at higher
%% cortical levels.

\paragraph{Example: a hierarchical classification circuit.}
Because hierarchical proximal gradient descent admits a separate regularizer
$\mcR^{(\ell)}$ at each level, which is a node potential of the Markov random
field of §\ref{sec:pgm}, distinct computational modules can be stacked.  A
three-level hierarchy for categorical perception might use the following
modules.
\begin{itemize}
    \item \textbf{Level 1 (early sensory level):} a Laplace prior
      $\mcR^{(1)}$, which yields soft-thresholding dynamics that act as a
      sparse coder for low-level features.
    \item \textbf{Level 2 (intermediate level):} a nonnegativity prior
      $\mcR^{(2)}$, which yields ReLU dynamics for strictly nonnegative
      intermediate features.
    \item \textbf{Level 3 (top level):} an entropic prior $\mcR^{(3)}$ on
      the probability simplex, which yields softmax dynamics that produce a
      mutually exclusive categorical decision.
\end{itemize}
A single hierarchical proximal gradient descent thus recovers the three
level-wise activation functions (soft-threshold, ReLU, softmax) as one
mechanism, and each activation function arises as the proximal operator of
the chosen level prior.  Variable splitting is not only a device to handle
deep linear products; it is also the mechanism that lets a single network
pass from continuous feature extraction to discrete classification.

%% \clearpage
\section{Summary}\label{sec:conclusion}

This paper has proposed flat and hierarchical regularized MAP objectives
for predictive coding and Bayesian inference by means of proximal gradient
descent.  The leak, the effective synaptic matrix $W = I_M - \Phi^\top \Pi
\Phi$, the local drive, and the static nonlinearity of a leaky firing-rate
neuron all emerge from proximal gradient descent on that objective
(Proposition~\ref{prop:firing-rate}).  The choice of the prior fixes the
activation function (Table~\ref{tab:priors}).

Variable splitting extends the single-level construction to a hierarchy and
produces the Rao--Ballard prediction-error circuit as a distributed
proximal gradient descent solver. At each level of the hierarchy, the
regularizers $\mcR^{(\ell)}$ deliver level-specific activation functions
(Proposition~\ref{prop:hier-pgd}).  In particular, a sensory-to-categorical
pipeline (sparse coder $\to$ ReLU features $\to$ softmax decision) is
proposed, without a change of inference method between the continuous and
the discrete levels. This approach is in contrast with the mixed-model
approach of \citep{KJF-TF-FR-PS-GP:17,KJF-TP-BdV:17}.  In this work, the
per-level priors are not a modeling convenience: they encode the assumed
beliefs, metabolic costs, and physical constraints -- and they shape of the
firing-rate nonlinearities as well as the equilibrium toward which the
dynamics converges.

\subsection*{Acknowledgement}
This material is based upon work supported by the Army Research Laboratory
under grant number W911NF-24-1-0228. We thank Professors Cristina Savin
(NYU) and Sara Solla (Northwestern University) for insightful conversations
on probabilistic models.

%% \bibliographystyle{abbrvnat+doi}
%% \bibliography{alias,FB,Main}

\begin{thebibliography}{35}
\providecommand{\natexlab}[1]{#1}
\providecommand{\url}[1]{\texttt{#1}}
\expandafter\ifx\csname href\endcsname\relax
  \providecommand{\doi}[1]{doi: #1}\else
  \providecommand{\doi}[1]{doi: \href{https://doi.org/#1}{\nolinkurl{#1}}}\fi

\bibitem[Attinger et~al.(2017)Attinger, Wang, and Keller]{AA-BW-GBK:17}
A.~Attinger, B.~Wang, and G.~B. Keller.
\newblock Visuomotor coupling shapes the functional development of mouse visual
  cortex.
\newblock \emph{Cell}, 169\penalty0 (7):\penalty0 1291--1302, 2017.
\newblock \doi{10.1016/j.cell.2017.05.023}.

\bibitem[Barrett and Miller(2026)]{LFB-EKM:26}
L.~F. Barrett and E.~K. Miller.
\newblock Categorization is ‘baked’ into the brain.
\newblock \emph{Nature Reviews Neuroscience}, 27\penalty0 (6):\penalty0
  435–456, 2026.
\newblock \doi{10.1038/s41583-026-01036-2}.

\bibitem[Bastos et~al.(2012)Bastos, Usrey, Adams, Mangun, Fries, and
  Friston]{AMB-WMU-RAA-GRM-PF-KJF:12}
A.~M. Bastos, W.~M. Usrey, R.~A. Adams, G.~R. Mangun, P.~Fries, and K.~J.
  Friston.
\newblock Canonical microcircuits for predictive coding.
\newblock \emph{Neuron}, 76\penalty0 (4):\penalty0 695--711, 2012.
\newblock \doi{10.1016/j.neuron.2012.10.038}.

\bibitem[Betteti et~al.(2025)Betteti, Baggio, Bullo, and
  Zampieri]{SB-GB-FB-SZ:24m}
S.~Betteti, G.~Baggio, F.~Bullo, and S.~Zampieri.
\newblock Firing rate models as associative memory: {Synaptic} design for
  robust retrieval.
\newblock \emph{Neural Computation}, 37\penalty0 (10):\penalty0 1807--1838,
  2025.
\newblock \doi{10.1162/neco.a.28}.

\bibitem[Boyd et~al.(2010)Boyd, Parikh, Chu, Peleato, and
  Eckstein]{SP-NP-EC-NP-JE:10}
S.~Boyd, N.~Parikh, E.~Chu, B.~Peleato, and J.~Eckstein.
\newblock Distributed optimization and statistical learning via the alternating
  direction method of multipliers.
\newblock \emph{Foundations and Trends in Machine Learning}, 3\penalty0
  (1):\penalty0 1--124, 2010.
\newblock \doi{10.1561/2200000016}.

\bibitem[Buckley et~al.(2017)Buckley, Kim, McGregor, and
  Seth]{CLB-CSK-SM-AKS:17}
C.~L. Buckley, C.~S. Kim, S.~McGregor, and A.~K. Seth.
\newblock The free energy principle for action and perception: {A} mathematical
  review.
\newblock \emph{Journal of Mathematical Psychology}, 81:\penalty0 55--79, 2017.
\newblock \doi{10.1016/j.jmp.2017.09.004}.

\bibitem[Carandini and Heeger(2012)]{MC-DJH:12}
M.~Carandini and D.~J. Heeger.
\newblock Normalization as a canonical neural computation.
\newblock \emph{Nature Reviews Neuroscience}, 13\penalty0 (1):\penalty0 51--62,
  2012.
\newblock \doi{10.1038/nrn3136}.

\bibitem[Carreira-Perpi{\~n}{\'a}n and Wang(2014)]{MCP-WW:14}
M.~{\'A}. Carreira-Perpi{\~n}{\'a}n and W.~Wang.
\newblock Distributed optimization of deeply nested systems.
\newblock In \emph{Int.\ Conf.\ Artificial Intelligence and Statistics
  (AISTATS)}, Proceedings of Machine Learning Research, pages 10--19,
  Reykjavik, Iceland, 2014. PMLR.
\newblock URL \url{https://proceedings.mlr.press/v33/carreira-perpinan14.html}.

\bibitem[Centorrino et~al.(2023)Centorrino, Gokhale, Davydov, Russo, and
  Bullo]{VC-AG-AD-GR-FB:23c}
V.~Centorrino, A.~Gokhale, A.~Davydov, G.~Russo, and F.~Bullo.
\newblock Euclidean contractivity of neural networks with symmetric weights.
\newblock \emph{IEEE Control Systems Letters}, 7:\penalty0 1724--1729, 2023.
\newblock \doi{10.1109/LCSYS.2023.3278250}.

\bibitem[Centorrino et~al.(2024{\natexlab{a}})Centorrino, Davydov, Gokhale,
  Russo, and Bullo]{VC-AD-AG-GR-FB:24a}
V.~Centorrino, A.~Davydov, A.~Gokhale, G.~Russo, and F.~Bullo.
\newblock On weakly contracting dynamics for convex optimization.
\newblock \emph{IEEE Control Systems Letters}, 8:\penalty0 1745--1750,
  2024{\natexlab{a}}.
\newblock \doi{10.1109/LCSYS.2024.3414348}.

\bibitem[Centorrino et~al.(2024{\natexlab{b}})Centorrino, Gokhale, Davydov,
  Russo, and Bullo]{VC-AG-AD-GR-FB:23a}
V.~Centorrino, A.~Gokhale, A.~Davydov, G.~Russo, and F.~Bullo.
\newblock Positive competitive networks for sparse reconstruction.
\newblock \emph{Neural Computation}, 36\penalty0 (6):\penalty0 1163–1197,
  2024{\natexlab{b}}.
\newblock \doi{10.1162/neco_a_01657}.

\bibitem[Combettes and Pesquet(2011)]{PLC-JCP:11}
P.~L. Combettes and J.-C. Pesquet.
\newblock \emph{Proximal Splitting Methods in Signal Processing}, page
  185–212.
\newblock Springer New York, 2011.
\newblock ISBN 9781441995698.
\newblock \doi{10.1007/978-1-4419-9569-8_10}.

\bibitem[Combettes and Pesquet(2020)]{PLC-JCP:20b}
P.~L. Combettes and J.-C. Pesquet.
\newblock Deep neural network structures solving variational inequalities.
\newblock \emph{Set-Valued and Variational Analysis}, 28\penalty0 (3):\penalty0
  491--518, 2020.
\newblock \doi{10.1007/s11228-019-00526-z}.

\bibitem[Davydov et~al.(2025)Davydov, Centorrino, Gokhale, Russo, and
  Bullo]{AD-VC-AG-GR-FB:23f}
A.~Davydov, V.~Centorrino, A.~Gokhale, G.~Russo, and F.~Bullo.
\newblock Time-varying convex optimization: A contraction and equilibrium
  tracking approach.
\newblock \emph{IEEE Transactions on Automatic Control}, 70\penalty0
  (11):\penalty0 7446--7460, 2025.
\newblock \doi{10.1109/TAC.2025.3576043}.

\bibitem[Friston(2005)]{KJF:05}
K.~J. Friston.
\newblock A theory of cortical responses.
\newblock \emph{Philosophical Transactions of the Royal Society B},
  360\penalty0 (1456):\penalty0 815--836, 2005.
\newblock \doi{10.1098/rstb.2005.1622}.

\bibitem[Friston(2008)]{KJF:08}
K.~J. Friston.
\newblock Hierarchical models in the brain.
\newblock \emph{PLoS Computational Biology}, 4\penalty0 (11):\penalty0
  e1000211, 2008.
\newblock \doi{10.1371/journal.pcbi.1000211}.

\bibitem[Friston(2010)]{KJF:10}
K.~J. Friston.
\newblock The free-energy principle: {A} unified brain theory?
\newblock \emph{Nature Reviews Neuroscience}, 11\penalty0 (2):\penalty0
  127--138, 2010.
\newblock \doi{10.1038/nrn2787}.

\bibitem[Friston et~al.(2017{\natexlab{a}})Friston, {FitzGerald}, Rigoli,
  Schwartenbeck, and Pezzulo]{KJF-TF-FR-PS-GP:17}
K.~J. Friston, T.~{FitzGerald}, F.~Rigoli, P.~Schwartenbeck, and G.~Pezzulo.
\newblock Active inference: {A} process theory.
\newblock \emph{Neural Computation}, 29\penalty0 (1):\penalty0 1--49,
  2017{\natexlab{a}}.
\newblock \doi{10.1162/NECO_a_00912}.

\bibitem[Friston et~al.(2017{\natexlab{b}})Friston, Parr, and
  {de~Vries}]{KJF-TP-BdV:17}
K.~J. Friston, T.~Parr, and B.~{de~Vries}.
\newblock The graphical brain: {Belief} propagation and active inference.
\newblock \emph{Network Neuroscience}, 1\penalty0 (4):\penalty0 381--414,
  2017{\natexlab{b}}.
\newblock \doi{10.1162/NETN_a_00018}.

\bibitem[Gokhale et~al.(2024)Gokhale, Davydov, and Bullo]{AG-AD-FB:24d}
A.~Gokhale, A.~Davydov, and F.~Bullo.
\newblock Proximal gradient dynamics: {Monotonicity}, exponential convergence,
  and applications.
\newblock \emph{IEEE Control Systems Letters}, 8:\penalty0 2853--2858, 2024.
\newblock \doi{10.1109/LCSYS.2024.3516632}.

\bibitem[Hassan-Moghaddam and Jovanovi{\'c}(2021)]{SHM-MRJ:21}
S.~Hassan-Moghaddam and M.~R. Jovanovi{\'c}.
\newblock Proximal gradient flow and {D}ouglas-{R}achford splitting dynamics:
  {G}lobal exponential stability via integral quadratic constraints.
\newblock \emph{Automatica}, 123:\penalty0 109311, 2021.
\newblock \doi{10.1016/j.automatica.2020.109311}.

\bibitem[Keller and Mrsic-Flogel(2018)]{GBK-TDM:18}
G.~B. Keller and T.~D. Mrsic-Flogel.
\newblock Predictive processing: {A} canonical cortical computation.
\newblock \emph{Neuron}, 100\penalty0 (2):\penalty0 424--435, 2018.
\newblock \doi{10.1016/j.neuron.2018.10.003}.

\bibitem[Lee and Mumford(2003)]{TSL-DM:03}
T.~S. Lee and D.~Mumford.
\newblock Hierarchical {Bayesian} inference in the visual cortex.
\newblock \emph{Journal of the Optical Society of America A}, 20\penalty0
  (7):\penalty0 1434--1448, 2003.
\newblock \doi{10.1364/josaa.20.001434}.

\bibitem[Marino(2022)]{JM:22}
J.~Marino.
\newblock Predictive coding, variational autoencoders, and biological
  connections.
\newblock \emph{Neural Computation}, 34\penalty0 (1):\penalty0 1--44, 2022.
\newblock \doi{10.1162/neco_a_01458}.

\bibitem[Millidge et~al.(2021)Millidge, Seth, and Buckley]{BM-AS-CLB:21}
B.~Millidge, A.~Seth, and C.~L. Buckley.
\newblock Predictive coding: {A} theoretical and experimental review.
\newblock \emph{arXiv preprint}, 2021.
\newblock \doi{10.48550/arXiv.2107.12979}.

\bibitem[Millidge et~al.(2022)Millidge, Tschantz, and Buckley]{BM-AT-CLB:22}
B.~Millidge, A.~Tschantz, and C.~L. Buckley.
\newblock Predictive coding approximates backprop along arbitrary computation
  graphs.
\newblock \emph{Neural Computation}, 34\penalty0 (6):\penalty0 1329--1368,
  2022.
\newblock \doi{10.1162/neco_a_01497}.

\bibitem[Olshausen and Field(1996)]{BAO-DJF:96}
B.~A. Olshausen and D.~J. Field.
\newblock Emergence of simple-cell receptive field properties by learning a
  sparse code for natural images.
\newblock \emph{Nature}, 381\penalty0 (6583):\penalty0 607--609, 1996.
\newblock \doi{10.1038/381607a0}.

\bibitem[Parikh and Boyd(2014)]{NP-SB:14}
N.~Parikh and S.~Boyd.
\newblock Proximal algorithms.
\newblock \emph{Foundations and Trends in Optimization}, 1\penalty0
  (3):\penalty0 127--239, 2014.
\newblock \doi{10.1561/2400000003}.

\bibitem[Rao and Ballard(1999)]{RPNR-DHB:99}
R.~P.~N. Rao and D.~H. Ballard.
\newblock Predictive coding in the visual cortex: {A} functional interpretation
  of some extra-classical receptive-field effects.
\newblock \emph{Nature Neuroscience}, 2\penalty0 (1):\penalty0 79--87, 1999.
\newblock \doi{10.1038/4580}.

\bibitem[Rossi et~al.(2025)Rossi, Centorrino, Bullo, and
  Russo]{FR-VC-FB-GR:25k}
F.~Rossi, V.~Centorrino, F.~Bullo, and G.~Russo.
\newblock Neural policy composition from free energy minimization.
\newblock \emph{Technical Report}, 2025.
\newblock \doi{10.48550/arXiv.2512.04745}.
\newblock arXiv:2512.04745.

\bibitem[Rozell et~al.(2008)Rozell, Johnson, Baraniuk, and
  Olshausen]{CJR-DHJ-RGB-BAO:08}
C.~J. Rozell, D.~H. Johnson, R.~G. Baraniuk, and B.~A. Olshausen.
\newblock Sparse coding via thresholding and local competition in neural
  circuits.
\newblock \emph{Neural Computation}, 20\penalty0 (10):\penalty0 2526--2563,
  2008.
\newblock \doi{10.1162/neco.2008.03-07-486}.

\bibitem[Salvatori et~al.(2026)Salvatori, Mali, Buckley, Lukasiewicz, Rao,
  Friston, and Ororbia]{TS-AM-CLB-TL-RPNR-KF-AO:26}
T.~Salvatori, A.~Mali, C.~L. Buckley, T.~Lukasiewicz, R.~P. Rao, K.~Friston,
  and A.~Ororbia.
\newblock A survey on neuro-mimetic deep learning via predictive coding.
\newblock \emph{Neural Networks}, 195:\penalty0 108161, 2026.
\newblock \doi{10.1016/j.neunet.2025.108161}.

\bibitem[Taylor et~al.(2016)Taylor, Burmeister, Xu, Singh, Patel, and
  Goldstein]{GT-AB-MX-MS-AP-TG:16}
G.~Taylor, R.~Burmeister, Z.~Xu, B.~Singh, A.~Patel, and T.~Goldstein.
\newblock Training neural networks without gradients: {A} scalable {ADMM}
  approach.
\newblock In \emph{International Conference on Machine Learning}, pages
  2722--2731. PMLR, 2016.
\newblock URL \url{https://proceedings.mlr.press/v48/taylor16.html}.

\bibitem[von Helmholtz(1867)]{HvH:1867}
H.~von Helmholtz.
\newblock \emph{Handbuch der Physiologischen {Optik}}.
\newblock Voss, Leipzig, 1867.

\bibitem[Whittington and Bogacz(2017)]{JCRW-RB:17}
J.~C.~R. Whittington and R.~Bogacz.
\newblock An approximation of the error backpropagation algorithm in a
  predictive coding network with local {Hebbian} synaptic plasticity.
\newblock \emph{Neural Computation}, 29\penalty0 (5):\penalty0 1229--1262,
  2017.
\newblock \doi{10.1162/neco_a_00949}.

\end{thebibliography}

\end{document}